\def\adl@drawiv#1#2#3{%
        \hskip.5\tabcolsep
        \xleaders#3{#2.5\@tempdimb #1{1}#2.5\@tempdimb}%
                #2\z@ plus1fil minus1fil\relax
        \hskip.5\tabcolsep}
\newcommand{\cdashlinelr}[1]{%
  \noalign{\vskip\aboverulesep
            \global\let\@dashdrawstore\adl@draw
            \global\let\adl@draw\adl@drawiv}
  \cdashline{#1}
  \noalign{\global\let\adl@draw\@dashdrawstore
            \vskip\belowrulesep}}
\newtheorem{mydef}{Definition}
\newtheorem{mythm}{Theorem}
\newtheorem{remark}{Remark}
\newcommand{\change}[1]{\textcolor{black}{#1}}
\providecommand{\keywords}[1]
{
  \small	
  \textbf{\textit{Index Terms---}} #1
}
\title{Differentiable Safe Controller Design through Control Barrier Functions}
\author{Shuo Yang$^*$, Shaoru Chen$^*$, Victor M. Preciado, and Rahul Mangharam
\thanks{Shuo Yang, Shaoru Chen, Victor M. Preciado, Rahul Mangharam are with the Department
of Electrical and Systems Engineering, University of Pennsylvania, Philadelphia, PA 19104, USA. Email: \{yangs1, srchen, preciado, rahulm\}@seas.upenn.edu.}
\thanks{$^*$Authors contributed equally.}
}
\begin{document}
\maketitle

\begin{abstract}
Learning-based controllers, such as neural network (NN) controllers, can show high empirical performance but lack formal safety guarantees. To address this issue, control barrier functions (CBFs) have been applied as a safety filter to monitor and modify the outputs of learning-based controllers in order to guarantee the safety of the closed-loop system. However, such modification can be myopic with unpredictable long-term effects. In this work, we propose a safe-by-construction NN controller which employs differentiable CBF-based safety layers and relies on a set-theoretic parameterization. We compare the performance and computational complexity of the proposed controller and an alternative projection-based safe NN controller in learning-based control. Both methods demonstrate improved closed-loop performance over using CBF as a separate safety filter in numerical experiments.


\keywords{Safety-critical control, control barrier functions, neural network controller, safe learning control.}

\end{abstract}



\IEEEpeerreviewmaketitle

\section{Introduction}
Learning-based control has become increasingly popular for controlling complex dynamical systems~\cite{pierson2017deep} since it requires little expert knowledge and can be carried out in an automatic, data-driven manner. However, due to the black-box nature of learning models, learning-based controllers such as neural network (NN) controllers lack formal guarantees which significantly limits their deployment in safety-critical applications. 

The integration of control-theoretical approaches and machine learning has provided a promising solution to safe learning control, where trainable machine learning modules are embedded into a control framework that guarantees the safety or stability of the dynamical system~\cite{donti2020enforcing, furieri2022distributed}. Of wide applicability is the control barrier function (CBF) framework~\cite{ames2016control, ames2019control} which explicitly specifies a safe control set and guards the system inside a safe invariant set. This is achieved by constructing a CBF-based safety filter that projects any reference control input (possibly generated by a NN controller) onto the safe control set online. When a continuous-time control-affine system is considered, such projection reduces to a convex quadratic program (QP) which is referred to as CBF-QP. Due to its simplicity, flexibility, and formal safety guarantees, CBFs have been applied in safe learning control with many successful applications~\cite{anand2021safe, dawson2022safe}.

\begin{figure}[!t]
\centering
	\begin{subfigure}{0.66 \columnwidth}
		\centering 
		\includegraphics[width = \textwidth]{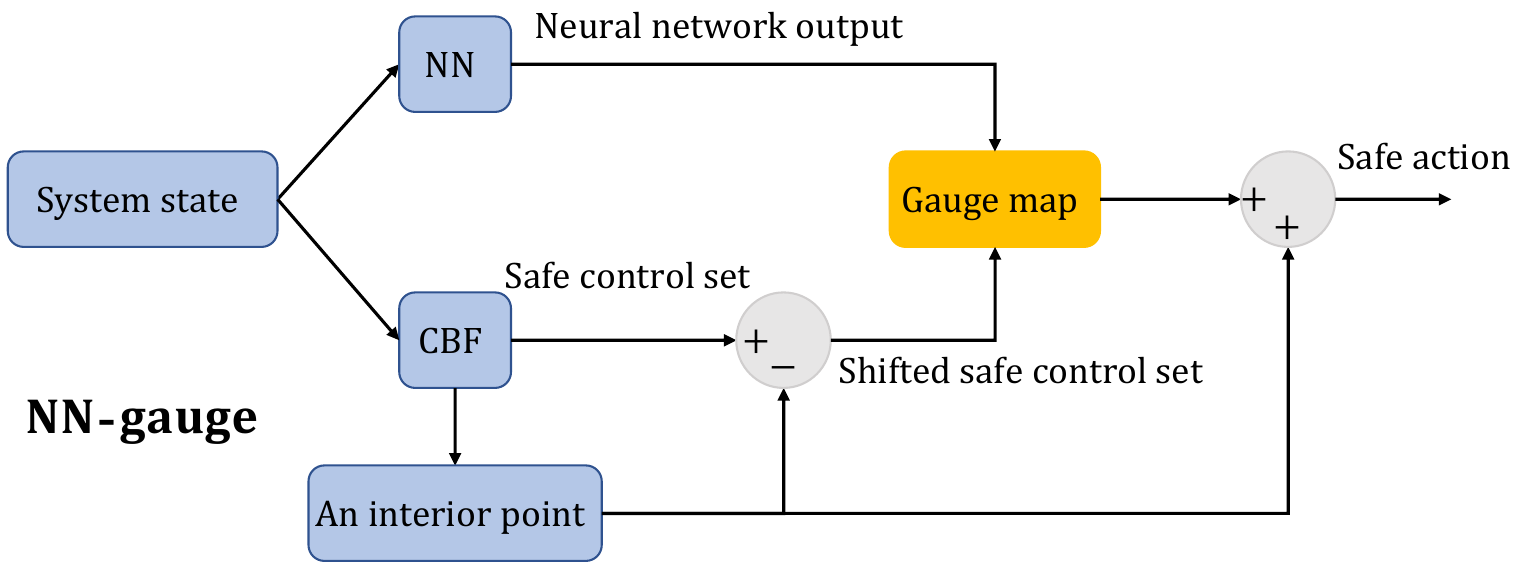}
		\caption{Gauge map-based safe NN controller architecture.}
  \label{fig:gauge_nn}
	\end{subfigure}
\hfil
	\begin{subfigure}{0.6 \columnwidth}
		\centering
		\includegraphics[width = \textwidth]{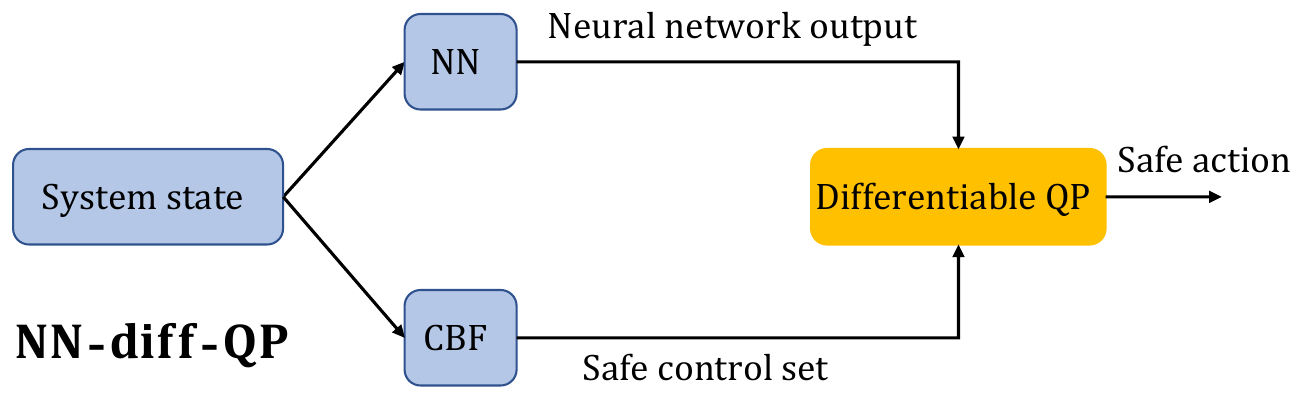}
		\caption{CBF-QP-based safe NN controller architecture.}
  \label{fig:diff_qp_framework}
	\end{subfigure}
\caption{Safe-by-construction NN controllers that utilize CBFs to construct differentiable safety layers (yellow blocks). }
\vspace{-10pt}
\label{fig:architecture}
\end{figure}

Compared with model predictive control (MPC)~\cite{camacho2013model}, which needs to solve a nonconvex optimization problem in the face of nonlinear dynamical systems, CBF-QP is computationally efficient to solve online. However, unlike MPC, the QP-based safety filter only operates in a minimally invasive manner, i.e., it generates the safe control input closest (in the Euclidean norm) to the reference control input, unaware of the long-term effects of its action. This indicates that the effects of the safety filter on the performance of the closed-loop system are hard to predict. Therefore, the application of the safety filter may give rise to myopic controllers~\cite{cohen2020approximate} that induce sub-par performance in the long term. 

To address the issue of myopic CBF-based safety filters, in this work we propose to utilize CBF to construct safe-by-construction NN controllers that allow end-to-end learning. Incorporating safety layers in the NN controller allows the learning agent to take the effects of safety filters into account during training in order to maximize long-term performance. Inspired by~\cite{tabas2022computationally}, we design a differentiable safety layer using the gauge map which establishes a bijection between the polytopic output set of a NN (e.g., an $\ell_\infty$ norm ball) and the safe control set characterized by CBFs. We denote the proposed architecture as NN-gauge (Fig.~\ref{fig:gauge_nn}). We compare NN-gauge with an alternative safe NN controller, NN-diff-QP, which consists of a NN followed by a differentiable CBF-QP layer (Fig.~\ref{fig:diff_qp_framework}). In the online execution, NN-gauge requires closed-form evaluation or solving a linear program (LP) while NN-diff-QP solves a quadratic program. Both methods are significantly cheaper to run online than MPC. 


\subsection{Related works}
\textbf{Safe controller design}: The use of gauge map in safe learning control was proposed in~\cite{tabas2022computationally, tabas2022safe} which only consider linear dynamics for which control invariant sets and an interior safe control policy are achievable. In this work, by proposing NN-gauge, we significantly extend the scope of this framework to handle nonlinear dynamics with CBFs and overcome the arising computational difficulties by applying an implicit interior policy parameterization. The construction of NN-diff-QP naturally follows from the use of differentiable optimization layers~\cite{amos2017optnet, amos2018differentiable, agrawal2019differentiable}. NN-diff-QP is applied in~\cite{pereira2021safe} and~\cite{emam2021safe} for reinforcement learning tasks. {In~\cite{cortez2022differentiable}, CBFs are applied as penalty functions to promote the safety of NN controllers} However, unlike NN-gauge or NN-diff-QP, the resulting NN controllers do not have formal safety guarantees.

\textbf{Differential CBF}: Introducing learning modules in the parameterization of CBFs can improve the feasibility and performance of the safety filter~\cite{parwana2022recursive, xu2022feasibility} even in the face of changing environments~\cite{ma2022learning, xiao2021barriernet}. These works focus on learning or improving CBFs such that the safe control set is enlarged and the safety filter can work better with a reference controller. Instead, in this paper, we consider NN controller synthesis with a given CBF. 


The contributions of the paper are summarized below:

\begin{enumerate}[leftmargin=1.5em]
    \item We propose a novel, differentiable, safe-by-construction NN controller NN-gauge as shown in Fig.~\ref{fig:gauge_nn} using CBFs. To the best of our knowledge, NN-gauge is the only alternative to the projection-based safe NN controller NN-diff-QP (Fig.~\ref{fig:diff_qp_framework}). Compared with NN-diff-QP, NN-gauge enjoys more efficient training and online evaluation.
    \item We provide detailed case studies to evaluate the performance-complexity trade-off of NN-gauge and NN-diff-QP. 
    \item We demonstrate that learning safe-by-construction NN controllers leads to better long-term closed-loop performance than filtering a trained, possibly unsafe NN controller. 
\end{enumerate}

In Section~\ref{sec:preliminary}, background on CBF is provided and the control problem is introduced. The construction of NN-gauge and NN-diff-QP is shown in Section~\ref{sec:controllerdesign}, followed by numerical examples in Section~\ref{sec:simulation}. Section~\ref{sec:conclude} concludes the paper.

\section{Preliminary and problem formulation}\label{sec:preliminary}
\subsection{System model}
In this paper we are interested in a continuous-time nonlinear control-affine system: 
\begin{align}\label{eq:system}
    \Dot{x} = f(x) + g(x)u
\end{align}
where $f$ and $g$ are locally Lipschitz, $x\in D \subseteq \mathbb{R}^n$ is the state, $D$ denotes a compact set in $\mathbb{R}^n$, and $u \in \mathbb{R}^m$ is the control input subject to bounded polytopic control constraints:
\begin{equation}\label{eq:input_constr}
   u \in  U = \{u \in \mathbb{R}^m \mid A_u u \leq b_u\}.
\end{equation}

\subsection{Control barrier functions}
Safety can be framed in the context of enforcing set invariance in the state space, i.e., the state cannot exit a safe set. The safe set $\mathcal{C}$ is represented by the superlevel set
of a continuously differentiable function $h(x)$. The algebraic expressions of the safe
set $\mathcal{C}$, boundary of the safe set $\partial \mathcal{C}$, and interior of the safe set $\text{Int}(\mathcal{C})$ are given by:
\begin{align}
 \mathcal{C}&=\{x\in \mathbb{R}^n:h(x)\ge 0\}, \nonumber\\
 \partial\mathcal{C}&=\{x\in D\subset \mathbb{R}^n:h(x)= 0\}, \\
 \text{Int}(\mathcal{C})&=\{x\in D\subset \mathbb{R}^n:h(x)> 0\}. \nonumber
\end{align}
For a locally Lipschitz continuous control law $u=k(x)$, we have that $\Dot{x}=f(x)+g(x)k(x)$ is locally Lipschitz continuous. Thus, for any initial condition $x_0\in D$, there exists a maximum time interval of existence $I(x_0)=[0,\tau_{max})$, such that $x(t)$ is the unique solution to the ordinary differential equation~(\ref{eq:system}) on $I(x_0)$. We frame the safety of system~\eqref{eq:system} in terms of set invariance as shown below. 
\begin{mydef}(Forward invariance and safety)
The set $\mathcal{C}$ is \emph{forward invariant} if for every $x_0\in \mathcal{C}$, $x(t)\in \mathcal{C}$ holds for $x(0)=x_0$ and all $t\in I(x_0)$. If $\mathcal{C}$ is forward invariant, we say system~\eqref{eq:system} is safe. 
\end{mydef}
To verify invariance of $\mathcal{C}$ under the control input constraints~\eqref{eq:input_constr}, a control barrier function is constructed as a certificate which characterizes the admissible set of control inputs that render $\mathcal{C}$ forward invariant. 
%
\begin{mydef}(Extended class $\mathcal{K}_{\infty}$ function)
A continuous function $\alpha$: $\mathbb{R}\rightarrow \mathbb{R}$ is said to be an extended $\mathcal{K}_{\infty}$ if it is strictly increasing and $\alpha(0)=0$.
\end{mydef}
\begin{mydef}(Control barrier function)
Let $\mathcal{C}\subset D\subset \mathbb{R}^n$ be the superlevel set of a continuously differentiable function $h: D\rightarrow \mathbb{R}$, then $h$ is
a control barrier function if there exists an extended class $\mathcal{K}_{\infty}$ function $\alpha(\cdot)$ such that for the control system (\ref{eq:system}):
\begin{align}
    \mathsf{sup}_{u\in U}[L_fh(x)+L_gh(x)u]\ge -\alpha(h(x)),
\end{align}
for all $x\in D$ where $L_f h(x)$ and $L_g h(x)$ denote the Lie derivatives. 
\end{mydef}
Given the CBF $h(x)$, the set of all control values that render $\mathcal{C}$ safe is given by:
\begin{equation}\label{eq:safe_control_set}
    K_{cbf}(x) = \{u\in U: L_fh(x)+L_gh(x)u+\alpha (h(x))\ge 0\}
\end{equation}
which we denote as the safe control set. The following theorem shows that the existence of a control barrier function implies that the control system~\eqref{eq:system} is safe:
\begin{mythm}(\cite[Theorem 2]{ames2016control})
\label{thm:cbf}
Assume $h(x)$ is a CBF on $D \supset \mathcal{C}$ and $\frac{\partial h}{\partial x}(x) \neq 0$ for all $x \in \partial \mathcal{C}$. Then any Lipschitz continuous controller $u(x)$ such that $u(x) \in K_{cbf}(x)$ for all $x \in \mathcal{C}$ will render the set $\mathcal{C}$ forward invariant.
\end{mythm}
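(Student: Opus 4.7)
The plan is to prove forward invariance via the standard comparison argument: reduce the question about the state trajectory $x(t)$ in $\mathbb{R}^n$ to a scalar differential inequality on $h(x(t))$, and then invoke a comparison lemma with a scalar autonomous ODE whose nonnegative solutions stay nonnegative.

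First I would fix any $x_0 \in \mathcal{C}$ and, because $u(x)$ is Lipschitz and $f,g$ are locally Lipschitz, observe that the closed-loop system $\dot x = f(x) + g(x)u(x)$ admits a unique solution $x(t)$ on the maximal interval $I(x_0) = [0,\tau_{\max})$. Define the scalar function $H(t) = h(x(t))$. Differentiating along the trajectory and using $u(x(t)) \in K_{cbf}(x(t))$ together with the defining inequality of $K_{cbf}$ in~\eqref{eq:safe_control_set}, I obtain
\begin{equation*}
\dot H(t) = L_f h(x(t)) + L_g h(x(t))\, u(x(t)) \;\geq\; -\alpha(H(t)),
\end{equation*}
for almost every $t \in I(x_0)$, with $H(0) = h(x_0) \geq 0$.

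Next I would compare $H$ with the scalar autonomous ODE $\dot y = -\alpha(y)$, $y(0) = H(0)$. Since $\alpha$ is an extended class $\mathcal{K}_\infty$ function, the origin is an equilibrium and $\alpha(y) > 0$ for $y > 0$, so $y(t)$ starting nonnegative stays nonnegative for all time (in fact $y(t) \geq 0$ is preserved because $\dot y = 0$ at $y = 0$ and $\dot y < 0$ above $0$). By the comparison lemma applied to the differential inequality above, $H(t) \geq y(t) \geq 0$ on $I(x_0)$, hence $x(t) \in \mathcal{C}$ throughout its interval of existence. This is precisely forward invariance of $\mathcal{C}$.

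The remaining subtlety is the role of the assumption $\partial h/\partial x \neq 0$ on $\partial \mathcal{C}$: this regularity prevents degenerate situations in which $\mathcal{C}$ is not the closure of its interior or in which trajectories touch $\partial\mathcal{C}$ tangentially in a way that the CBF inequality cannot detect. I expect this to be the main technical obstacle to handle cleanly—concretely, I would use the nonvanishing gradient to argue that a trajectory leaving $\mathcal{C}$ must cross $\partial \mathcal{C}$ transversally with $\dot H < 0$ at the crossing point, which contradicts the bound $\dot H \geq -\alpha(H) = 0$ obtained on $\partial \mathcal{C}$. Combined with the comparison argument above, this rules out any exit from $\mathcal{C}$ on $I(x_0)$, completing the proof.
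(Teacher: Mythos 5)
The paper does not prove this statement---it is quoted verbatim from \cite[Theorem~2]{ames2016control}---so there is no in-paper proof to compare against; your comparison-lemma argument is, however, essentially the standard proof given in that reference, and it is correct in outline. Two small points. First, the comparison lemma for $\dot H \geq -\alpha(H)$ against $\dot y = -\alpha(y)$ requires uniqueness of solutions of the scalar ODE, which needs $\alpha$ to be locally Lipschitz (as assumed in \cite{ames2016control}); the paper's definition of an extended class $\mathcal{K}_\infty$ function only demands continuity and strict monotonicity, so with that weaker hypothesis you should compare against the minimal solution of $\dot y=-\alpha(y)$ (or note the Lipschitz assumption explicitly)---the conclusion $y(t)\geq 0$ still holds since $0$ is an equilibrium and $\alpha>0$ above it. Second, your closing paragraph overcomplicates things: once $H(t)\geq y(t)\geq 0$ on $I(x_0)$, forward invariance is already fully established, and no separate transversality argument at $\partial\mathcal{C}$ is needed. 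The hypothesis $\partial h/\partial x \neq 0$ on $\partial\mathcal{C}$ is not used in the sufficiency direction at all; in \cite{ames2016control} it serves the converse statement and the asymptotic stability of $\mathcal{C}$, so you should either drop that paragraph or state plainly that the condition is not needed for this implication. Also, $\dot H(t)$ exists for every $t\in I(x_0)$ (not merely almost every $t$), since $h$ is $C^1$ and the closed-loop vector field is continuous along the trajectory.
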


One important feature of $K_{cbf}(x)$ is that it is a polytope~\footnote{The polytopic safe control set $K_{cbf}(x)$ is non-empty for all $x \in \mathcal{C}$ by definition of CBF. In this work, we assume a valid CBF for system~\eqref{eq:system} is given and use it for NN controller design, although synthesizing CBFs can be challenging itself and is an active area of research. } for all states. This enables the construction of a QP-based safety filter that modifies any given reference controller $k(x)$ in a minimally invasive fashion~\cite{ames2019control} as follows:
\begin{equation} \label{eq:safety_filter}
\begin{aligned}
    u(x) = \underset{u}{\text{argmin}} & \quad \lVert u - k(x) \rVert_2^2\\
    \text{subject to} & \quad L_fh(x)+L_gh(x)u+\alpha (h(x))\ge 0, \\
    & \quad u \in U.
\end{aligned}
\end{equation}
%

Although the safety filter~\eqref{eq:safety_filter} guarantees the forward invariance of the safe set, it does not take into account the consequences of the projection on future states and the performance of the closed-loop system. This issue is inevitable when the reference controller $k(x)$ and the safety filter are designed separately, and we propose to fix it by designing safe-by-construction controllers that are amenable to any learning or optimization framework. Particularly, in this work, we consider optimizing NN controllers using modern machine learning solvers (such as stochastic gradient descent (SGD) and Adam~\cite{adam}) with known system dynamics. 



\subsection{Problem formulation}
Following the definition of safe control set~\eqref{eq:safe_control_set}, we define the set of safe control policies as
$\Pi :=\{\pi: \mathbb{R}^n \mapsto \mathbb{R}^m|\pi(x) \in K_{cbf}(x), \forall x\in \mathcal{C}\}.$
Our task is to design a controller for system~\eqref{eq:system} such that a performance objective is optimized and the closed-loop system always stays inside the safe set $\mathcal{C}$. In other words, we consider finding a policy $\pi(x)$ that solves the following optimal control problem within a horizon $T < \tau_{max}$~\footnotemark:
\footnotetext{The horizon $T$ is a tuning parameter for NN controller design. While a larger $T$ is always preferred to improve the closed-loop performance of the trained NN controller, it necessarily increases the computational complexity of training.}
\begin{equation}\label{eq:control_prob}
\begin{aligned}
 \underset{\pi \in \Pi}{\text{minimize}} & \quad \mathbb{E}_{x(0)\in \mathcal{C}} \Big [ \frac{1}{T}\int_{0}^{T} c(x(t), u(t)) \,dt \Big ]\\
    \text{subject to} & \quad \Dot{x}=f(x)+g(x)u,\\
    & \quad u(t)=\pi(x(t)),
\end{aligned}
\end{equation}
where $\mathbb{E}_{x(0)\in \mathcal{C}}$ is the expectation with respect to the initial state, $c(x(t), u(t))$ is the cost associated with occupying state $x(t)$ and action $u(t)$. The cost function $c(x(t), u(t))$ could be of any form and is problem-specific, e.g., it can formulate the penalty or barrier functions of constraints on the state $x(t)$.

Despite the complex dynamics, cost functions, and safe policy constraints, problem~\eqref{eq:control_prob} can still be effectively approached by parameterizing an NN policy and applying SGD/Adam which is empowered by automatic differentiation and modern machine learning solvers~\cite{baydin2018automatic}. This procedure is simple and has been shown effective to synthesize high-performance NN controllers~\cite{drgona2020learning, mukherjee2022neural}. Next, we study different parameterizations of safe NN policies $\pi \in \Pi$ for solving~\eqref{eq:control_prob} and evaluate their performances. Notably, with safe NN policies, the effects of the CBF-based safety layer are automatically considered.

\section{Safe Controller Design}
\label{sec:controllerdesign}
A natural way to construct a safe NN controller $\pi(x)$ is to restrict the output of the controller into the safe control set~\eqref{eq:safe_control_set} for all states. NN-diff-QP (Fig.~\ref{fig:diff_qp_framework}) achieves this by concatenating a differentiable projection layer which can be implemented in a NN using toolboxes such as \texttt{cvxpylayers}~\cite{agrawal2019differentiable} and \texttt{qpth}~\cite{amos2017optnet}. In this section, we propose a different parameterization of a safe NN controller which achieves improved online computational efficiency using gauge maps. 

\subsection{Gauge map}
Tabas et al.~\cite{tabas2022computationally} observe that, while it is challenging to directly restrict the output of a NN inside a general polytope, adding a hyperbolic tangent activation layer easily constrains the NN output into the unit $\ell_\infty$-norm ball $\mathbb{B}_{\infty}$. This motivates the application of the gauge map that establishes a bijection between $\mathbb{B}_{\infty}$ and a general polytope which in this paper we consider as $K_{cbf}(x)$. The notion of gauge map is facilitated by the concept of C-set.

\begin{mydef} (C-set~\cite{blanchini2008set})
A C-set is a convex and compact set including the origin as an interior point.
\end{mydef}

The gauge function (or Minkowski function) of a vector $v\in \mathbb{R}^m$ with respect to a C-set $Q\subset \mathbb{R}^m$ is given by
\begin{align}
    \gamma_{Q}(v)=\mathsf{inf}\{\lambda\ge 0|v\in \lambda Q\}.
\end{align}
When $Q$ is a polytopic C-set defined by $\{w\in\mathbb{R}^m|F^T_iw\le g_i, i=1,2,\cdots, r\}$, the gauge function can be written in closed-form as
$
    \gamma_{Q}(v)=\mathsf{max}_i\{F_i^Tv/g_i\}.
$
For any $v\in \mathbb{B}_{\infty}$, the gauge map is defined as
\begin{align}\label{gaugemap}
    G(v|Q)=\frac{||v||_{\infty}}{\gamma_{Q}(v)}\cdot v,
\end{align}
which constructs a bijection (see \cite[Lemma 1]{tabas2022computationally}) between the unit ball $\mathbb{B}_{\infty}$ and the C-set $Q$. As shown in Fig.~\ref{fig:gauge_illustration}, all points $v \in \mathbb{B}_\infty$ are mapped ``proportionally" to the same level set of the polytope $Q$ by the gauge map, while their projections onto $Q$ tend to concentrate on the boundary of $Q$.  

\begin{figure}[t]
\includegraphics[width=8.5cm]{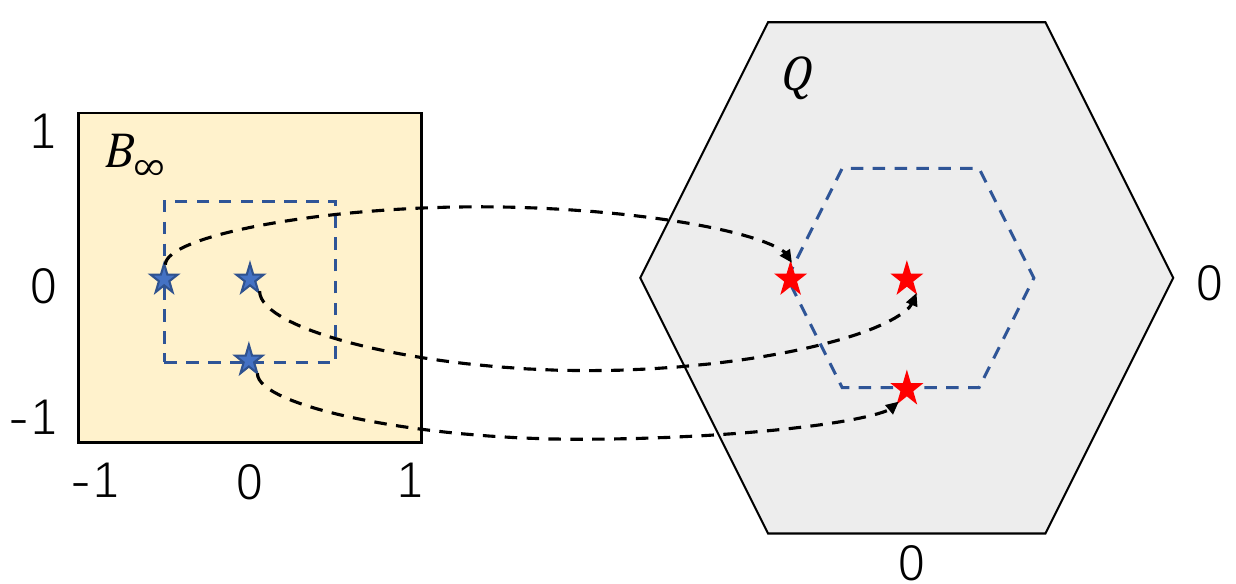}
\centering
\caption{Illustration of the gauge map from the $\ell_\infty$ ball $B_\infty$ to a polytopic set $Q$.}
\label{fig:gauge_illustration}
\end{figure}




\subsection{Interior safe policy}
To apply the gauge map as a safety layer that maps the output of an NN to the safe control set $K_{cbf}(x)$, we have to first find an interior safe policy $\pi_{int}(x)$ such that $\pi_{int}(x) \in \text{Int}(\mathcal{C})$ for all $x \in \mathcal{C}$. This is necessary since $K_{cbf}(x)$ may not be a C-set, and we need to shift $K_{cbf}(x)$ by $\pi_{int}(x)$ such that it is recentered around the origin. When $\pi_{int}(x)$ is available, we have that the shifted safe control set $\hat{K}_{cbf}(x) :=K_{cbf}(x) - \pi_{int}(x)$ is a C-set for which we can apply the gauge map. 

An explicit construction of $\pi_{int}(x)$ is achievable for linear dynamical systems through multi-parametric programming~\cite{tabas2022safe}. \change{The recent work~\cite{wang2022explicit} proposes an algorithm to extract an explicit or closed-form safe policy from CBFs for system~\eqref{eq:system}. Such an explicit interior policy is desirable since in this case the gauge map can be evaluated in closed-form, making both the training and online evaluation of NN-gauge computationally efficient. However, for general nonlinear systems, the proposed algorithm in~\cite{wang2022explicit} can be complex to apply in practice. To address this issue, we propose an alternative method that implicitly constructs an interior safe policy by choosing $\pi_{int}(x)$ as the Chebyshev center~\cite{boyd2004convex} of $K_{cbf}(x)$.} Specifically, we have $\pi_{int}(x) = u^*$ where $u^*$ is the solution to the following LP:
%
\begin{equation} \label{eq:chebyshev}
\begin{aligned}
    \underset{u, R \geq 0}{\text{maximize}} &\quad R \\
    \mathsf{subject \:to} & \quad -L_gh(x)u+R\lVert -L_gh(x)\rVert_2\\
    &\quad\quad\quad\quad\quad\quad\quad \le L_fh(x)+\alpha(h(x)), \\
    & \quad a_{u, i}^\top u + R \lVert a_{u,i} \rVert_2 \leq b_{u, i}, \ i = 1, \cdots, d_u,\\
\end{aligned}
\end{equation}
where $a_{u, i}$ denotes the $i$-th row of $A_u$, $b_{u,i}$ denotes the $i$-th entry of $b_u$, and $d_u$ represents the number of linear constraints defining $U$. The Chebyshev center formulation of $\pi_{int}(x)$ pushes $\pi_{int}(x)$ away from the boundary of $K_{cbf}(x)$. It also facilitates training of the upstream NN, since it makes the target function that the NN needs to learn smoother. By the validity of CBF, the safe control set $K_{cbf}(x)$ is a non-empty polytope for all $x \in \mathcal{C}$. By choosing $\pi_{int}(x)$ as the Chebyshev center from~\eqref{eq:chebyshev}, we readily have that $\pi_{int}(x) \in \text{Int}(K_{cbf}(x))$ for $R> 0$ and the shifted safe control set $\hat{K}_{cbf}(x)$ is a C-set.






\subsection{Control policy architecture}
With an interior safe policy $\pi_{int}(x)$, we can now construct a safe NN controller using the gauge map, as shown in the following theorem. 
\begin{mythm} \label{thm:gauge}
Let $\phi_{\theta}: \mathcal{C}\rightarrow\mathbb{B}_{\infty}$ be a neural network parameterized by $\theta$ and $\pi_{int}(x)$ be an interior safe policy. Then, for any system state $x$ in the set $\mathcal{C}$, the policy:
\begin{equation} \label{eq:gauge_nn}
    \pi_{\theta}(x):=G(\phi_{\theta}(x)|\hat{K}_{cbf}(x))+\pi_{int}(x)
\end{equation}
has the following properties:
\begin{enumerate}
    \item $\pi_{\theta}$ is safe.
    \item The policy $\pi_\theta(\cdot)$ is trainable with respect to the NN parameters $\theta$.
\end{enumerate}
\end{mythm}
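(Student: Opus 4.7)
The plan is to verify the two claims separately, relying on the bijection property of the gauge map recalled from \cite{tabas2021computationally} and on the piecewise-linear structure of the objects involved.

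For the safety claim, I would argue by unwinding the composition in \eqref{eq:gauge_nn}. First, because $\phi_\theta$ ends in a hyperbolic tangent activation layer, its output satisfies $\phi_\theta(x)\in\mathbb{B}_\infty$ for every $x\in\mathcal{C}$. Second, by the choice of $\pi_{int}(x)$ as the Chebyshev center from \eqref{eq:chebyshev} with $R>0$, the shifted set $\hat{K}_{cbf}(x)=K_{cbf}(x)-\pi_{int}(x)$ contains the origin in its interior and is a polytopic C-set. Invoking \cite[Lemma 1]{tabas2021computationally}, the gauge map $G(\cdot\mid \hat{K}_{cbf}(x))$ is a bijection from $\mathbb{B}_\infty$ onto $\hat{K}_{cbf}(x)$, so $G(\phi_\theta(x)\mid \hat{K}_{cbf}(x))\in \hat{K}_{cbf}(x)$. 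Adding back $\pi_{int}(x)$ yields $\pi_\theta(x)\in K_{cbf}(x)$ for all $x\in\mathcal{C}$, i.e.\ $\pi_\theta\in\Pi$. Forward invariance of $\mathcal{C}$ then follows from Theorem~\ref{thm:cbf}.

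For the trainability claim, I would observe that $\pi_\theta$ is a composition of operations that are each (piecewise) differentiable with respect to their inputs: the NN map $\phi_\theta$ is differentiable in $\theta$ by standard backpropagation; the gauge function in closed form \eqref{gaugefun} is a maximum of linear functionals and hence piecewise linear; the $\ell_\infty$ norm is likewise piecewise linear; and the interior policy $\pi_{int}(x)$ is $\theta$-independent (its own dependence on $x$ is irrelevant for $\partial\pi_\theta/\partial\theta$). Applying the chain rule, one obtains
\begin{equation*}
    \frac{\partial \pi_\theta(x)}{\partial \theta} \;=\; \frac{\partial G(v\mid \hat{K}_{cbf}(x))}{\partial v}\bigg|_{v=\phi_\theta(x)} \,\frac{\partial \phi_\theta(x)}{\partial \theta},
\end{equation*}
which is well-defined almost everywhere in $\theta$. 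This is precisely the kind of (sub)gradient information SGD/Adam consumes, so $\pi_\theta$ is trainable end-to-end.

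The main obstacle I anticipate is the non-smoothness of $v\mapsto G(v\mid \hat{K}_{cbf}(x))$: both $\|v\|_\infty$ and $\gamma_{\hat{K}_{cbf}(x)}(v)$ are non-differentiable along the kinks of their defining maxima, and the map is a priori undefined at $v=0$ where $\|v\|_\infty/\gamma_{\hat{K}_{cbf}(x)}(v)$ is $0/0$. I would handle the first issue by noting that the set of non-smooth points has Lebesgue measure zero and that standard autodiff frameworks select a valid subgradient, which suffices for first-order optimization. For the second, I would extend $G$ continuously by $G(0\mid \hat{K}_{cbf}(x)):=0$, which is consistent with the limit since the ratio is bounded and $v\to 0$; this preserves both the bijection and the (sub)differentiability required above.
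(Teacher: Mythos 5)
Your proposal is correct and follows essentially the same route as the paper: safety comes from the gauge map sending $\mathbb{B}_\infty$ onto the shifted C-set $\hat{K}_{cbf}(x)$ and translating back by $\pi_{int}(x)$, and trainability comes from (sub)differentiability of the composition, which the paper simply delegates to \cite[Theorem 1]{tabas2021computationally}. Your treatment is more explicit than the paper's --- in particular the verification that $\hat{K}_{cbf}(x)$ is a C-set via $R>0$ in \eqref{eq:chebyshev} and the continuous extension of $G$ at $v=0$ --- but these are elaborations of the same argument rather than a different one.
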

\begin{proof}
1) By the construction of the gauge map, we have $G(\phi_{\theta}(x)|\hat{K}_{cbf}(x)) \in \hat{K}_{cbk}(x)$ which is the safe control set $K_{cbf}(x)$ shifted by the interior safe control input $\pi_{int}(x)$. Therefore, we have $\pi_\theta(x) \in K_{cbf}(x)$ for all $x \in \mathcal{C}$, and conclude that $\pi_\theta(x)$ is safe.

2) As shown in~\cite[Theorem 1]{tabas2022computationally}, automatic differentiation can be applied to compute the subgradients of $\pi_\theta$ with respect to $\theta$, making it possible to train $\pi_\theta$.
\end{proof}
The NN $\phi_\theta$ can embed any architecture and learns the residual control policy added to the interior safe policy $\pi_{int}(x)$. By the construction of the gauge map, the applied controller $\pi_\theta(x)$ shown in~\eqref{eq:gauge_nn} belongs to the safe control set $K_{cbf}(x)$, and the performance of $\pi_\theta(x)$ is no worse than $\pi_{int}(x)$ after training. The online evaluation of $\pi_\theta$ can be done in closed-form if an explicit interior safe policy $\pi_{int}(x)$ is given, or by solving an LP if the implicit construction~\eqref{eq:chebyshev} is used. 

\begin{remark}
Our analysis of safe NN controllers can be readily applied to incorporate high order CBFs~\cite{xiao2019control} in which case the safe control set $K_{cbf}(x)$ is still a polytope. In addition to the CBF-based safe control sets, NN-gauge and NN-diff-QP can easily encode other forms of polytopic safe control sets such as $u \in \{u \mid Gu \leq b\}$ and $u \in \{u \mid Fx + Gu \leq b\}$. 
\end{remark}

\section{Numerical Experiments}\label{sec:simulation}
In this section, we demonstrate the application of safe NN controllers in adaptive cruise control (ACC) and aircraft collision avoidance. The following control methods are considered:
\begin{enumerate}
    \item \textbf{MPC}: Model predictive control could guarantee the safety of the closed-loop system with good performance, but it is computationally costly to run online when the horizon is large or the dynamics is nonlinear. 
    \item \textbf{NN}: A feedforward NN controller is trained to optimizes~\eqref{eq:control_prob} with regularizers penalizing violations of safety constraints. No safety filter is applied in the online evaluation of this NN controller. This method enjoys fast online evaluation but suffers the risk of safety violations.
    \item \textbf{NN-QP}: The above NN controller is equipped with the CBF-QP safety filter during online evaluation. 
    \item \textbf{NN-diff-QP} and \textbf{NN-gauge}: The safe-by-construction NN controllers introduced in Section~\ref{sec:controllerdesign} that are trained directly to optimize the control performance~\eqref{eq:control_prob}. \change{We also include the interior policy $\pi_{int}$ given by~\eqref{eq:chebyshev} as a special case of NN-gauge for comparison.}
\end{enumerate}
For NN-diff-QP, we use \texttt{cvxpylayers}~\cite{agrawal2019differentiable} to construct the differentiable QP layer. For NN-gauge, the implicit interior policy~\eqref{eq:chebyshev} is applied. All training is performed using PyTorch on Google Colab with Adam~\cite{adam} as the optimizer.

\subsection{Adaptive cruise control}
Adaptive cruise control is a common example to validate safe control strategies~\cite{ames2016control, zeng2021safety}. The control goal of ACC is to let the ego car achieve the desired cruising speed while maintaining a safe distance from the leading car. We consider the scenario where the ego car tries to follow the leading car on a straight road. The dynamics of the problem is given by (model adapted from~\cite{zeng2021safety}):
\begin{align} \label{eq:ACC}
    \Dot{x}(t)=\begin{bmatrix}
0 & 1 & 0 \\
0 & -0.1 & 0\\
0 & -1 & 0
\end{bmatrix}x(t)+
\begin{bmatrix}
0\\
2.5\\
0 
\end{bmatrix}u(t),
\end{align}
where $x=\begin{bmatrix}
x_1 & x_2 & x_3
\end{bmatrix}^\top=\begin{bmatrix}
p_f & v_f & d
\end{bmatrix}^\top$, $p_f$ is the position, $v_f$ is the velocity of the ego car, $d$ is the distance between the ego and leading cars, and $u\in [-1, 1]$ is the control input denoting the acceleration. To prevent collision between the two cars, the CBF is chosen as $h(x)=x_3-1.8x_2$
and the trajectory cost for the ego car is given by
\begin{align}
    \texttt{Cost}=\int_{0}^T (0.01(x_2(t)-v_{f}^*)^2+0.05u(t)^2) dt
\end{align}
with a desired speed $v_{f}^*=30m/s$. The leading car travels at a constant speed of $16 m/s$, so we expect the ego car's speed to converge to $16 m/s$ at the steady state since a speed greater than $16 m/s$ will lead to a violation of safe distance.

All NN controllers, namely NN, NN-gauge and NN-diff-QP, are trained to optimize~\eqref{eq:control_prob} with horizon $T = 1$s by Adam with randomly sampled initial states. The system dynamics~\eqref{eq:ACC} is discretized with sampling rate $\Delta t = 0.1$s during training. The trainable NN modules $\phi_\theta$ in NN, NN-gauge, and NN-diff-QP have the same architecture. 

We test all NN controllers on 5 randomly sampled initial states over a horizon $T = 20$s in order to evaluate their long-term performance. The results are reported in Table~\ref{table:acc}. For one of the testing initial condition 
$x(0)=\begin{bmatrix}
0 & 30 & 100
\end{bmatrix}^\top$,
we plot the values of the CBF function and the velocity along the trajectory of the ego car in Fig.~\ref{fig:accVelocity} and Fig.~\ref{fig:accCBFvalue}, respectively.

We observe that both NN-gauge and NN-diff-QP achieve similar closed-loop performance, comparable to MPC~\footnote{While MPC solves a finite horizon optimal control problem to optimality, its closed-loop performance is not guaranteed to be optimal. In Table~\ref{table:acc}, NN-diff-QP achieved better performance than MPC in the long term.}. While NN achieves reasonable performance, it violates safety constraints (Fig.~\ref{fig:accCBFvalue}). Directly applying the CBF-QP safety filter on it enforces safe control, but deteriorates the long-term closed-loop performance of the NN controller as shown in Fig.~\ref{fig:accVelocity} where the optimal behavior of the ego car is supposed to have a steady-state velocity of $16 m/s$. NN-gauge has an edge over NN-diff-QP in training and online evaluation time due to its use of LP-based safety layers. \change{We also observe a large performance improvement of NN-gauge compared with $\pi_{int}$ which means training the NN module in NN-Gauge can greatly improve the performance of the policy.}

\begin{table}
\centering
\small
\caption{Closed-loop performance comparison of controllers in the adaptive cruise control example. Average values from $5$ randomly sampled initial states are reported.}
\label{table:acc}
\begin{tabular}{*{5}{c}}
\toprule
 & Safety & Trajectory & Training time& Solve\\
 &  & cost & per epoch& time\\
\midrule
MPC & Safe & 269.3 &N/A &3.11s \\
NN & Unsafe & 640.3 &0.03s &0.04s \\
NN-QP & Safe & 818.2 &0.03s &0.36s \\
NN-diff-QP & Safe & \textbf{258.3}&11.0s & \textbf{0.35s} \\
NN-gauge & Safe & \textbf{270.9} &8.3s &\textbf{0.28s} \\
\cdashlinelr{1-5}
 $\pi_{int}$ & Safe    & 734.8 & N/A &0.26s\\
\bottomrule
\end{tabular}
\end{table}

\begin{figure}[!t]
\centering
	\begin{subfigure}{0.38 \columnwidth}
    \centering
    \includegraphics[width=\textwidth]{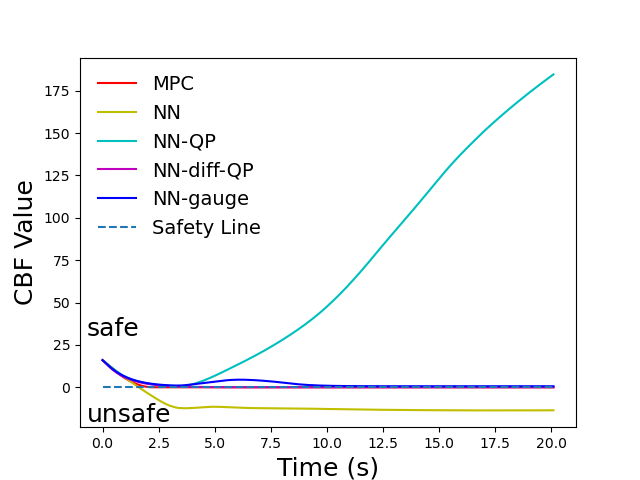}
    \caption{CBF values.}
    \label{fig:accCBFvalue}
	\end{subfigure}
		\begin{subfigure}{0.38 \columnwidth}
    \centering
    \includegraphics[width=\textwidth]{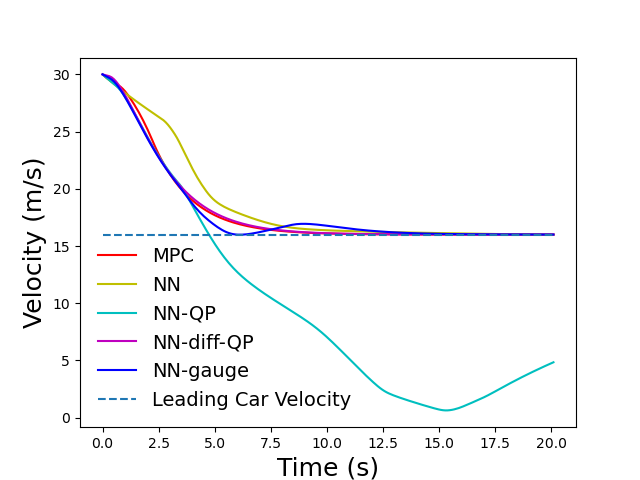}
    \caption{Velocity of the ego car.}
    \label{fig:accVelocity}
	\end{subfigure}
\caption{CBF values (left) and velocity of the ego car (right) under different controllers are evaluated in closed-loop for $20$s. A CBF value below zero indicates unsafety, and the optimal behavior of the ego car is expected to have a steady state velocity of $16 m/s$ same as the leading car. }
\label{fig:acc}
\end{figure}


\subsection{Aircraft collision avoidance}
We apply our framework to the aircraft collision avoidance problem which is adapted from~\cite{squires2018constructive}. Specifically, we consider a dynamical system with states
$x=\begin{bmatrix}
x_a^\top & x_b^\top
\end{bmatrix}^\top$, where
$x_a=
\begin{bmatrix}
p_{a, x} & p_{a, y} & \theta_a
\end{bmatrix}
^\top\in \mathbb{R}^3$ is the state of aircraft $a$ with $p_{a, x}$ and $p_{a, y}$ denoting the position and $\theta_a$ denoting the orientation. The state $x_b$ of aircraft $b$ is defined similarly. The control inputs are $u= \begin{bmatrix}
v_a & w_a & v_b & w_b
\end{bmatrix}
^\top\in \mathbb{R}^4$ where $v_a$ and $w_a$ are speed and turning rate of aircraft $a$, respectively, and $v_b, w_b$ are defined similarly. The dynamics of the aircraft $a$ (and similarly for aircraft $b$) is given by:
\begin{align}
    \Dot{x}_a(t)=\begin{bmatrix}
v_a(t)\text{cos}(\theta_a(t)) \\
v_a(t)\text{sin}(\theta_a(t))\\
w_a(t)
\end{bmatrix}.
\end{align}

As shown in Fig.~\ref{fig:aircraftDot}, our goal is to drive aircraft $a$ to the left and aircraft $b$ to the right while maintaining a minimum safe distance of $0.5$ between them. Aircrafts $a$ and $b$ try to stay close to $p_{a,x} = -5$ and $p_{b,x} = 5$, respectively. A quadratic cost function is defined accordingly over a horizon $T$, and we adopt the constructive CBF developed in~\cite{squires2018constructive} to encode the safe set in the state space which also considers the input constraint $ (v_a, w_a, v_b, w_b)\in [0.1, 1]\times [-1, 1]\times [0.1, 1]\times [-1, 1].$
Note that the minimum admissible velocity of aircraft $a$ and $b$ is $0.1$, so they cannot stop exactly at $p_{a,x} = -5$ or $p_{b,x} = 5$.



\begin{table}
\centering
\small
\caption{Closed-loop performance comparison of controllers in the aircraft collision avoidance example. Average values from $5$ randomly sampled initial states are reported.}
\label{table:aircrafts}
\begin{tabular}{*{5}{c}}
\toprule
 & Safety & Trajectory & Training time& Solve\\
 &  & cost & per epoch& time\\
\midrule
MPC & Safe & 1346.8 &N/A &44.1s \\
NN & Unsafe & 712.0 &0.24s &0.06s \\
NN-QP & Safe & 2557.6 &0.24s &1.28s \\
NN-diff-QP & Safe & \textbf{1900.7}&133.0s & \textbf{1.30s} \\
NN-gauge & Safe & \textbf{2157.8} &101.5s& \textbf{0.74s} \\
\cdashlinelr{1-5}
 $\pi_{int}$ & Safe    & 3495.9 & N/A &0.69s\\
\bottomrule
\end{tabular}
\end{table}

All NN controllers are trained similarly as in the ACC example with sampling rate $\Delta_t = 0.1$s and horizon $T =2$s, and are tested on 5 randomly sampled initial states with horizon $T = 20$s. The results are shown in Table~\ref{table:aircrafts}. One set of closed-loop system trajectories starting from the initial state $x(0)=\begin{bmatrix}
0.5 & 0 & \pi & -0.5 & 0 & 0
\end{bmatrix}^\top$ under different controllers are plotted in Fig.~\ref{fig:aircraftDot} together with the induced costs. With this initial condition, aircrafts $a$ and $b$ start close to each other with orientations leading to a head-on collision. 

\begin{figure}
    \centering
    \includegraphics[width=0.5\columnwidth]{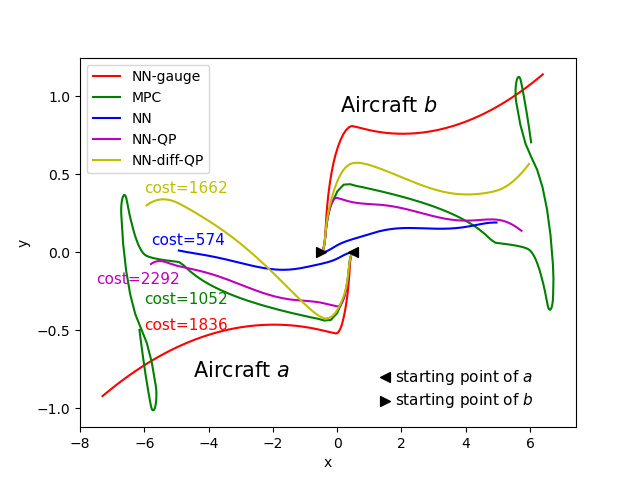}
    \caption{Trajectories of the aircraft $a$ and $b$ under different controllers. The induced costs of each trajectory are labeled accordingly. }
    \label{fig:aircraftDot}
\end{figure}

From Table~\ref{table:aircrafts} and Fig.~\ref{fig:aircraftDot}, we observe that NN achieves the best performance but is unsafe. Adding the CBF-QP as a safety filter (i.e., NN-QP) drastically deteriorates the performance of the NN controller. Among the NN controllers with safety guarantee, NN-diff-QP performs the best and NN-gauge achieves a similar level of performance. The MPC controller has the best performance with safety guarantee, but it has a significantly higher online solve time. 
\section{Conclusion}\label{sec:conclude}
In this paper, we showed that CBF-based safety filters can degrade closed-loop performance if their long-term effects are not considered during learning. To address this issue, we proposed a novel safe-by-construction NN controller which utilizes CBF and gauge map to construct a differentiable safety layer. The proposed gauge map-based NN controller achieves comparable performances as the projection-based NN controller while being computationally more efficient to train and evaluate online. Both the gauge map-based and projection-based safe NN controllers demonstrate improved performance compared with filtered NN controllers in numerical examples.  

\section*{Acknowledgement}
We thank Nikolai Matni for useful discussions.
This project is funded in part by the US Department of Transportation's Mobility21 National University Transportation Center and NSF CCRI $\#1925587$.

\bibliographystyle{unsrt}
\bibliography{reference}


\end{document}